\definecolor{darkgreen}{rgb}{0.0,0,0.9}
\newtcolorbox{wbox}
{
	colback  = white,
}
\let\R\relax
\newcommand*{\R}{\mathbb{R}}
\newcommand*{\suppress}[1]{}
\newcommand*{\cT}{\mathcal{T}}
\def\thm@space@setup{%
	\thm@preskip= 10pt
	\thm@postskip=\thm@preskip 
}
\renewcommand{\paragraph}{%
	\@startsection{paragraph}{4}%
	{\z@}{5pt}{-1em}%
	{\normalfont\normalsize\bfseries}%
}
\newtheorem{theorem}{Theorem}
\newtheorem{lemma}[theorem]{Lemma}
\newtheorem{corollary}[theorem]{Corollary}
\theoremstyle{definition}
\newtheorem{definition}[theorem]{Definition}
\newtheorem{notation}[theorem]{Notation}
\newtheorem{remark}[theorem]{Remark}
\newtheorem{alg}[theorem]{Algorithm}
\newenvironment{fminipage}%
{\begin{Sbox}\begin{minipage}}%
		{\end{minipage}\end{Sbox}\fbox{\TheSbox}}
\newcommand{\CS}{\mbox{${\cal S}$}}
\newcommand\QQ{\boldsymbol{\mathit{Q}}}
\newcommand{\cuts}{\mbox{\rm cuts}}
\newcommand{\Lc}{\mathcal{L}}
\title{An Extension of the Birkhoff-von Neumann Theorem \\
to Non-Bipartite Graphs}
\author[1]{Vijay V.~Vazirani}
\affil[1]{University of California, Irvine}
\date{}
\begin{document}
	\maketitle

\begin{abstract}
	We prove that a fractional perfect matching in a non-bipartite graph can be written, in polynomial time, as a convex combination of perfect matchings. This extends the Birkhoff-von Neumann Theorem from bipartite to non-bipartite graphs.
	
	The algorithm of Birkhoff and von Neumann is greedy; it starts with the given fractional perfect matching and successively ``removes'' from it perfect matchings, with appropriate coefficients. This fails in non-bipartite graphs -- removing perfect matchings arbitrarily can lead to a graph that is non-empty but has no perfect matchings. Using odd cuts appropriately saves the day. 
\end{abstract}
    
\bigskip
\bigskip
\bigskip
\bigskip
\bigskip
\bigskip
\bigskip
\bigskip

\bigskip
\bigskip
\bigskip
\bigskip
\bigskip
\bigskip
\bigskip
\bigskip

\bigskip
\bigskip
\bigskip
\bigskip
\bigskip
\bigskip

\pagebreak
    
\section{Introduction}
\label{sec:intro}

A classic theorem due to Birkhoff \cite{Birkhoff1946tres} and von Neumann \cite{von1953certain}, which we will call the BvN Theorem, states that any doubly stochastic matrix can be efficiently expressed as a convex combination of permutation matrices. Equivalently, the task performed by the BvN Algorithm is: Given a fractional perfect matching, decompose it into a convex combination of perfect matchings in polynomial time. We give the analogous result for non-bipartite graphs.

Formally, the BvN Theorem says that a fractional perfect matching\footnote{In this formulation, we will view each (fractional) perfect matching as a vector in $\R_+^{n^2}$.} $x$ in a bipartite graph $G = (U, V, E)$ can be decomposed into a convex combination of perfect matchings, i.e., there are perfect matchings $x_1, \ldots , x_k$ in $G$ such that 
\[ x = a_1 \cdot x_1 + \ldots + a_k \cdot x_k , \]
where each $a_i \geq 0$ and $\sum_{i = 1}^k {a_i} = 1$. We will call each $a_i \cdot x_i$ a {\em term} of the decomposition; it turns out that $n^2 -2n + 2$ terms suffice, see \cite{LP.book}.

In this paper, we prove that a fractional perfect matching in a non-bipartite graph can be decomposed into a convex combination of perfect matchings in polynomial time. We place an upper bound of $m$ on the number of terms in our decomposition, where $m$ is the number of positive components in the vector specifying the fractional perfect matching.

\begin{figure}[htb]
\begin{center}
\includegraphics[height=2.8in,width=4.8in,angle=0]{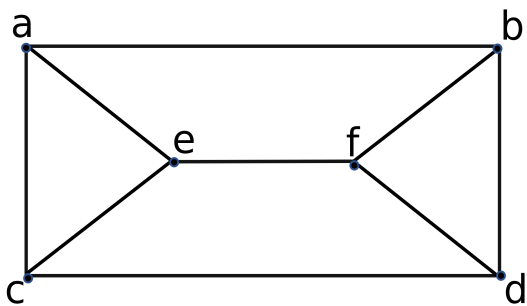}
\caption{Graph $G_1$, illustrating that the greedy approach fails in the non-bipartite case. \label{fig.counter-example}}
\end{center}
\end{figure}

In a sense, the BvN Theorem and our result are subsumed by the following very general fact. The Caratheodory theorem states that any point in a polytope $P$ can be written as convex combination of $N+1$ vertices of $P$, where $N$ is the dimension\footnote{The dimension of the perfect matching polytope of a matching covered graph is $m - n + 1 - k$, where $k$ is the number of bricks in the canonical decomposition of the graph; see \cite{LP.book} and Remark \ref{rem.brick}. A graph is {\em matching covered} if its every edge is in some perfect matching.} of $P$. Given a separation oracle for the polytope, there is even a polynomial time algorithm for computing this decomposition using the Ellipsoid algorithm \cite{GLS}. However, the resulting algorithm is far from practical, and moreover this fact gives no combinatorial insights into the individual problems to which it is applied. Consequently, the value of the BvN Theorem was not diminished by the discovery of  this fact and we believe the same should hold for our result. However, this fact is valuable in its own right -- for the high-level picture it provides. 

 The BvN Algorithm is greedy: it successively finds perfect matchings $x_1, \ldots , x_k$ in $G$, as well as the associated coefficients $a_1, \ldots , a_k$. We first show that a greedy approach will fail in a non-bipartite graph. Consider the graph $G = (V, E)$ given in Figure \ref{fig.counter-example} which we will call {\em graph $G_1$}. Let $x$ be the vector on $E$ having $1/3$ in each component; one can check that $x$ is a fractional perfect matching, see Section \ref{sec.preliminaries} for a formal definition. Suppose the first perfect matching found by the greedy approach is $M = \{(a, b), (c, d), (e, f)\}$. On subtracting $(1/3) M$ from $x$, we are left with a vector whose support has no perfect matchings; the support consists of the two ``triangles'' $(a, c, e)$ and $(b, f, d)$. It turns out that $x$ has a unique decomposition consisting of three terms each of which includes a distinct edge from $M$. Hence $M$ cannot be included to a non-zero extent in a decomposition of $x$.
 
 Next, consider the Petersen graph given in Figure \ref{fig.Petersen} which we will call {\em graph $G_2$}. Again, $1/3$ on each edge is a fractional perfect matching. Suppose the first perfect matching found is the five ``spokes'', say $M = \{(a, f), (b, g), (c, h), (d, i), (e, j)\}$. On subtracting $(1/3) M$ from $x$, we are left with a vector whose support consists of two disconnected 5-cycles and hence has no perfect matchings. Once again the graph has a unique decomposition, but this time it includes $M$ to the extent of $1/6$. 
 
Let us summarize the two issues raised by these examples. First, the perfect matching picked cannot be arbitrary. Second, it cannot be subtracted maximally from $x$; the algorithm needs to figure out the ``right'' coefficient of this term. Our algorithm gets around both these difficulties by dealing carefully with odd subsets of vertices. In contrast, in the bipartite case, picking an arbitrary perfect matching from the support of $x$ and subtracting it maximally from $x$ works; indeed that is the BvN Algorithm.

\begin{figure}[htb]
\begin{center}
\includegraphics[height=2.7in,width=4.8in,angle=0]{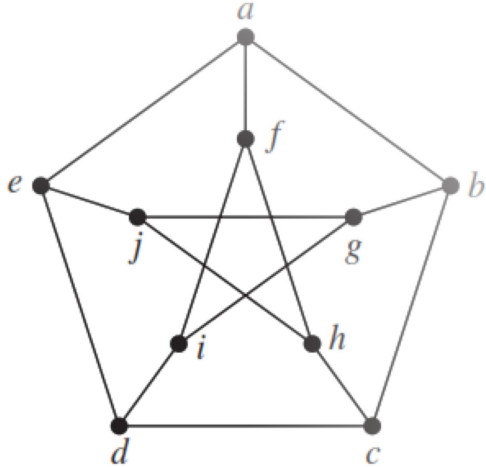}
\caption{Graph $G_2$, the Petersen graph. \label{fig.Petersen}}
\end{center}
\end{figure}

The BvN Theorem and Algorithm have numerous applications. Of special interest to us is application to one-sided matching markets under which $n$ agents need to be matched to $n$ indivisible goods, one per agent. The well-known Hylland-Zeckhauser \cite{hylland} scheme first renders each good divisible by viewing it as one unit of probability shares. It then allocates a total of one unit of probability shares to each agent. Clearly, this allocation is a fractional perfect matching over agents and goods. Finally, an integral perfect matching is found via randomized rounding, using the BvN Algorithm. Matching markets whose underlying graphs are non-bipartite also have important applications, e.g., to the roommates problem \cite{GaleS} and kidney exchange \cite{Roth2005pairwise}. The question of extending the BvN Theorem to non-bipartite graphs arose in that context.

\section{Preliminaries}
\label{sec.preliminaries}

\subsection{The Bipartite Case}
\label{sec.bipartite}

LP (\ref{eq.bipartite}) is a linear programming relaxation of the perfect matching problem in a bipartite graph $G = (U, V, E)$, where $|U| = |V| = n$. The polytope defined by constraints \ref{const.1} and \ref{const.2} is the convex hull of all perfect matchings in $G$, hence any point $x$ in this polytope is a convex combination of perfect matchings. We will call $x$ a {\em fractional perfect matching} in $G$. 

	\begin{maxi!}
		{} {\sum_{e \in E}  {x_{e}}}
			{\label{eq.bipartite}}
		{}
		\addConstraint{\sum_{e \in \delta(v)} {x_{e}}}{=1 \quad \label{const.1}}{\forall v \in (U \cup V)}
		\addConstraint{x_{e}}{\geq 0 \quad \label{const.2}}{\forall e \in E}
	\end{maxi!}
	
Throughout we will assume that the fractional perfect matching specified in the given instance has rational entries, i.e., $x \in \QQ_+^{n^2}$. For $\alpha \in \QQ_+$, with $0 < \alpha \leq 1$, define an {\em $\alpha$-fractional perfect matching in bipartite graph $G$} to be a feasible solution to the modification of LP (\ref{eq.bipartite}) obtained by replacing 1 in the right hand side of constraint \ref{const.1} by $\alpha$. Observe that $x$ is an $\alpha$-fractional perfect matching if and only if ${1 \over \alpha} x$ is a fractional perfect matching in $G$. 

We can now formally state the BvN Algorithm. The algorithm is iterative and we will call each of its iterations a {\em phase}. Given a fractional perfect matching $x$, clearly the support of $x$ contains a perfect matching. Find one such matching, say $M$. Let $\beta = \min_{e \in M} \{x_e\}$. Then $\beta \cdot M$ will be the first term\footnote{By a slight abuse of notation, $M$ will represent the edges of the perfect matching as well as the corresponding point (vector) in the polytope.} of the decomposition.

Next, subtract $\beta$ from all components of $x$ which correspond to edges of $M$, to obtain $x'$. Henceforth, we will denote this operation by $x' = (x - \beta \cdot M)$. Clearly in going from $x$ to $x'$, at least one edge of $M$ has been set to zero and will no longer be in the support of $x'$. It is easy to see that $x'$ is a $(1 - \beta)$-fractional perfect matching in $G$. As remarked above, ${1 \over {1 - \beta}} x'$ is a fractional perfect matching in $G$. Therefore its support contains a perfect matching and hence we can continue the process until $x'$ turns out to be the zero vector. Since at least one edge is dropped in each phase, the procedure takes polynomial time.  

\subsection{The Non-Bipartite Case}
\label{sec.non-bipartite}

LP (\ref{eq.non-bipartite}) is a linear programming relaxation of the perfect matching problem in a non-bipartite graph $G = (V, E)$, where $|V| = n$ is even. Again, the polytope defined by constraints \ref{const.1non} and \ref{const.2non} is the convex hull of all perfect matchings in $G$ and any point $x$ in this polytope is a {\em fractional perfect matching} in $G$. Throughout this paper, by an {\em odd set in $G$} we mean a set $S \subset V, \ \mbox{with} \ |S| \ \mbox{odd and} \ |S| \geq 3$.

	\begin{maxi!}
		{} {\sum_{e \in E}  {x_{e}}}
			{\label{eq.non-bipartite}}
		{}
		\addConstraint{\sum_{e \in \delta(v)} {x_{e}}}{= 1 \quad \label{const.1non}}{\forall v \in V}
		\addConstraint{\sum_{e \in \delta(S)} {x_{e}}}{\geq 1 \quad \label{const.2non}}{\forall \  \mbox{odd set} \ S}
		\addConstraint{x_{e}}{\geq 0}{\forall e \in E}
	\end{maxi!}
	
For $S \subset V$, the {\em cut of $S$}, denoted $\delta(S)$, is the set of edges having exactly one endpoint in $S$, i.e., $\delta(S) = \{(u, v) \in E \ | \ \ |\{u, v \} \cap S| = 1\}$. Odd sets $S$ for which constraint \ref{const.2non} is tight are called {\em tight odd sets}; clearly, every perfect matching crosses a tight odd set exactly once. As is common practice, we will use ``odd set'' and ``odd cut'' synonymously. 


For $\alpha \in \QQ_+$, with $0 < \alpha \leq 1$, define an {\em $\alpha$-fractional perfect matching in $G$} to be a feasible solution to the modification of LP (\ref{eq.non-bipartite}) obtained by replacing 1 in the right hand side of constraints \ref{const.1non} and \ref{const.2non} by $\alpha$. Since this LP is central to our algorithm, we give it explicitly as LP (\ref{eq.alpha}). As in the bipartite case, $x$ is an $\alpha$-fractional perfect matching if and only if ${1 \over \alpha} x$ is a fractional perfect matching in $G$.

	\begin{maxi!}
		{} {\sum_{e \in E}  {x_{e}}}
			{\label{eq.alpha}}
		{}
		\addConstraint{\sum_{e \in \delta(v)} {x_{e}}}{= \alpha \quad \label{const.1alpha}}{\forall v \in V}
		\addConstraint{\sum_{e \in \delta(S)} {x_{e}}}{\geq \alpha \quad \label{const.2alpha}}{\forall \  \mbox{odd set} \ S}
		\addConstraint{x_{e}}{\geq 0}{\forall e \in E}
	\end{maxi!}

For any vector $x \in \R_+^m$, let $E_x$ denote the set of edges which form the support of $x$ and let $G_x = (V, E_x)$ denote the {\em support graph of $x$}. In $G_x$, the cut of $S$ will be denoted by $\delta_x (S)$, i.e., $\delta_x(S) = \{(u, v) \in E_x \ | \ |\{u, v \} \cap S| = 1\}$ and $x(\delta_x(S))$ will denote $\sum_{e \in \delta_x (S)} {x_e}$. By ``find a minimum odd cut in $(G_x, x)$'' we mean find an odd set $S$, with $|S| \geq 3$, which minimizes $x(\delta_x (S))$, i.e., a minimum odd cut in graph $G_x$ with edge weights given by $x$. 

The next definition has been highlighted because it is central to this paper.

\begin{definition}
	\label{def.tight-alpha}
	Let $x$ be an $\alpha$-fractional perfect matching in $G$, i.e., it is a feasible solution to  LP (\ref{eq.alpha}). For an odd set $S \subset V$, we will say that $S$ is a {\em tight odd $\alpha$-cut} if $x(\delta_x(S)) = 1$. Observe that every tight odd cut in $G$ is also a tight odd $\alpha$-cut in $G_x$, since every perfect matching crosses a tight odd cut exactly once. 
\end{definition}

As in the bipartite case, the vector specified in the given instance has rational entries, i.e, $x \in \QQ_+^{{n(n-1)} \over 2}$.

\begin{notation}
	\label{not.edges}
	Corresponding to the input vector $x$, we will denote the number of edges in its support graph $G_x$ by $m$. We will denote the largest denominator occurring in components of $x$ by $d$.
\end{notation}

	\section{High Level Idea of the Algorithm}
\label{sec.high}

Given a fractional perfect matching $x$, our algorithm needs to execute the following two tasks in the first phase:
\begin{enumerate}
	\item Find a perfect matching $M$ in $G_x$ that belongs to some decomposition of $x$, i.e., such that $(x - b \cdot M)$ is a $(1 - b)$-fractional perfect matching for some $b > 0$. 
	\item Find the largest number, say $\beta$, such that $(x - \beta \cdot M)$ is a $(1 - \beta)$-fractional perfect matching. 
\end{enumerate}  

 Then the first term of the decomposition will be $\beta \cdot M$ and the algorithm will proceed to decompose $(x - \beta \cdot M)$. For the first task, we will find a maximal laminar family of tight odd $\alpha$-cuts in $G_x$, say $\Lc$, and use the following two lemmas.  

\begin{lemma}
	\label{lem.once}
	Let $b \cdot M$ be a term in some decomposition of $x$, where $M$ is a perfect matching in $G_x$. Then, $M$ crosses every cut $C \in \Lc$ exactly once. 
\end{lemma}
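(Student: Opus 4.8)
The plan is to derive the lemma from the defining tightness of the cuts in $\Lc$ together with the elementary parity of odd cuts. Write the decomposition witnessing the hypothesis as $x = \sum_{i=1}^{k} a_i\, M_i$, where each $M_i$ is a perfect matching of $G$ (necessarily supported in $E_x$), each $a_i > 0$, $\sum_{i=1}^{k} a_i = 1$, and $a_1 \cdot M_1 = b \cdot M$. The single structural fact I would isolate first is the standard parity property: for any perfect matching $N$ of $G$ and any odd set $S$, the quantity $|\delta(S) \cap N|$ is odd — hence at least $1$ — because the edges of $N$ lying inside $S$ cover an even number of vertices of $S$ while $|S|$ is odd, forcing an odd number of vertices of $S$ to be matched across $\delta(S)$.

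With this in hand, I would fix an arbitrary $C \in \Lc$. Since $C$ is a tight odd $\alpha$-cut we have $x(\delta_x(C)) = 1$. Substituting $x_e = \sum_{i : e \in M_i} a_i$ and summing over the edges of $\delta(C)$ gives
\[
1 \;=\; x(\delta_x(C)) \;=\; \sum_{e \in \delta(C)} \ \sum_{i \,:\, e \in M_i} a_i \;=\; \sum_{i=1}^{k} a_i\, |\delta(C) \cap M_i| \;\ge\; \sum_{i=1}^{k} a_i \;=\; 1 ,
\]
where the inequality is the parity bound $|\delta(C) \cap M_i| \ge 1$. Hence the inequality is in fact an equality, so $a_i\,|\delta(C)\cap M_i| = a_i$ and, since $a_i > 0$, $|\delta(C)\cap M_i| = 1$ for every $i$; in particular $M = M_1$ crosses $C$ exactly once. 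As $C$ was an arbitrary member of $\Lc$, the lemma follows.

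I do not anticipate a real obstacle: the argument uses only that each $C \in \Lc$ is tight and odd — and this is exactly where oddness is essential, since for an even set the bound $|\delta(C)\cap M_i|\ge 1$ fails and the displayed chain collapses to a triviality. It is worth noting that neither the laminarity nor the maximality of $\Lc$ plays any role in this particular lemma; those properties of $\Lc$ will be needed only for the converse direction — exhibiting a perfect matching $M$ that does appear in some decomposition of $x$ — and for the efficiency analysis of the algorithm.
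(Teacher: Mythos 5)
Your proof is correct. The underlying idea is the same as the paper's — both exploit that the fractional mass on $\delta(C)$ is exactly $1$ and that any perfect matching uses at least one edge of $\delta(C)$ — but the presentations differ. The paper argues by contradiction about the residual $x' = x - b\cdot M$: it observes that $x'$ must be a $(1-b)$-fractional perfect matching, and that if $M$ crossed some $C\in\Lc$ more than once, the odd-set constraint for $C$ would be violated in $x'$. You instead expand the full decomposition $x=\sum_i a_i M_i$, invoke the parity lower bound $|\delta(C)\cap M_i|\ge 1$ for each $i$, and deduce from the equality $\sum_i a_i|\delta(C)\cap M_i| = 1 = \sum_i a_i$ that every summand achieves the lower bound. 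Your version is a touch more self-contained (it does not appeal to the LP characterization of the residual being a $(1-b)$-fractional perfect matching), it makes the parity argument explicit where the paper merely writes ``clearly,'' and it proves something slightly stronger — that \emph{every} matching appearing in the decomposition, not just $M$, crosses each $C\in\Lc$ exactly once. The paper's phrasing, on the other hand, foreshadows the role of $x'$ in the algorithm, which is a natural choice given what follows. Both are valid; your remark that laminarity and maximality of $\Lc$ are irrelevant here is accurate and worth keeping in mind.
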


\begin{proof}
	Clearly, $M$ must cross every $C \in \Lc$ at least once. Let $x' = (x - b \cdot M)$. Since $b \cdot M$ is a term in a valid decomposition of $x$, $x'$ must be a $(1 - b)$-fractional perfect matching.  If $M$ crosses some cut $C$ more than once, then the corresponding constraint in \ref{const.2non} will be violated with respect to $x'$, contradicting the previous assertion. The lemma follows.	
	\end{proof}

For each $e \in E_x$, define $ \cuts(e) = \{ C \in \Lc \ | \ e \in \delta(C) \}$,
i.e., the set of cuts in $\Lc$ which $e$ crosses, and define the {\em weight of edge $e$} to be the cardinality of this set, i.e., $w(e) = |\cuts(e)|$. Next, find a minimum weight perfect matching in the graph $G_x$ under weight function $w$.   

\begin{lemma}
	\label{lem.min-wt}
	Let $M$ be a minimum weight perfect matching in graph $G_x$ under weight function $w$. Then the weight of this matching is $|\Lc|$ and it crosses each tight odd $\alpha$-cut exactly once.
\end{lemma}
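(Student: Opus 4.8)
The plan is to prove both halves by combining a lower bound from a carefully chosen "sum over cuts" expansion with the existence of a specific perfect matching achieving that bound. First I would establish the lower bound: for any perfect matching $N$ in $G_x$, we have $w(N) = \sum_{e \in N} |\cuts(e)| = \sum_{C \in \Lc} |N \cap \delta(C)|$, simply by switching the order of summation. Since $G_x$ is the support graph of a fractional perfect matching, every odd cut satisfies constraint \ref{const.2non} (with right-hand side $\alpha$ after rescaling), so $G_x$ has a perfect matching and every $C \in \Lc$ is an odd set; hence $|N \cap \delta(C)| \geq 1$ for each $C$, giving $w(N) \geq |\Lc|$ with equality iff $N$ crosses every $C \in \Lc$ exactly once. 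So the two assertions of the lemma are really the same assertion: it suffices to exhibit \emph{one} perfect matching in $G_x$ crossing every cut of $\Lc$ exactly once, since then the minimum weight is $|\Lc|$ and, by the equality condition just noted, every minimum weight perfect matching crosses each $C \in \Lc$ exactly once.

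Next I would produce such a perfect matching. The natural candidate is any perfect matching appearing in a decomposition of $x$: since $x$ is a (genuine, $\alpha=1$) fractional perfect matching, it lies in the perfect matching polytope and therefore admits a decomposition $x = \sum_i a_i M_i$ into perfect matchings $M_i$ of $G_x$ with $a_i > 0$; pick any $M_i$ from this list. By Lemma \ref{lem.once} (applied with $b = a_i$), this $M_i$ crosses every $C \in \Lc$ exactly once. Combining with the previous paragraph: $w(M_i) = |\Lc|$, this is the minimum possible weight, and any minimum weight perfect matching $M$ must also meet the equality condition and hence cross each $C \in \Lc$ exactly once. Finally, every tight odd $\alpha$-cut $S$ in $G_x$ that lies in $\Lc$ is crossed exactly once; and since $\Lc$ was chosen to be a \emph{maximal} laminar family of tight odd $\alpha$-cuts, I would argue that crossing every cut of $\Lc$ once forces $M$ to cross every tight odd $\alpha$-cut once — any tight odd $\alpha$-cut crossed more than once by $M$ could, via standard uncrossing of tight sets, be used to enlarge $\Lc$ while keeping it laminar, contradicting maximality.

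The main obstacle I anticipate is precisely this last step: upgrading "crosses every cut in the maximal laminar family $\Lc$ exactly once" to "crosses every tight odd $\alpha$-cut exactly once." This requires the uncrossing fact that the tight odd $\alpha$-cuts are closed (in the relevant sense) under the laminar-completion operations — i.e., that if $S$ is a tight odd $\alpha$-cut crossing some $C \in \Lc$, then one of the "corners" $S \cap C$, $S \setminus C$, etc., is again an odd set that is tight and can be added to $\Lc$. The submodularity of the cut function $x(\delta_x(\cdot))$ together with the parity bookkeeping for odd sets should deliver this, but it is the one place where more than bookkeeping is needed, and I would want to state the uncrossing lemma for tight odd $\alpha$-cuts explicitly (or cite the standard matching-theory version from \cite{LP.book}) before invoking maximality of $\Lc$.
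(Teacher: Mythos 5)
Your proposal is correct and mirrors the paper's own argument: both establish the lower bound $w(N) \geq |\Lc|$ by double-counting plus parity, exhibit a perfect matching achieving it via Lemma~\ref{lem.once} applied to a term of some decomposition, and then upgrade ``crosses each $C \in \Lc$ exactly once'' to ``crosses every tight odd $\alpha$-cut exactly once'' by appealing to maximality of $\Lc$ and standard uncrossing. You also correctly flag the uncrossing step as the one place needing real work; the paper likewise leaves it at ``via standard uncrossing arguments.''
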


\begin{proof}
By Lemma \ref{lem.once}, there is a perfect matching which crosses each cut in $\Lc$ exactly once and therefore has weight $|\Lc|$. Further, every perfect matching must cross each cut in $\Lc$ at least once and therefore must have weight at least $|\Lc|$. Hence the weight of a minimum weight perfect matching is $|\Lc|$. 

Since $M$ be a minimum weight perfect matching, its weight is $|\Lc|$ and it crosses each cut $C \in \Lc$ exactly once. Clearly, $M$ must cross each tight odd $\alpha$-cut at least once. Suppose $M$ crosses some tight odd $\alpha$-cut, say $C'$, more than once. Since $\Lc$ is a maximal laminar family of tight odd $\alpha$-cuts, via standard uncrossing arguments one can show that there must be a cut $C \in \Lc$ such that $M$ crosses $C$ more than once, leading to a contradiction.
\end{proof}

Let us consider graphs $G_1$ and $G_2$ again in light of these facts. It turns out that neither of these graphs has a tight odd cut, see Remark \ref{rem.brick}. $G_1$ has two tight odd $\alpha$-cuts, namely those defined by $\{a, c, e\}$ and $\{b, f, d\}$ and $G_2$ has no tight odd $\alpha$-cuts. Therefore, in $G_1$, the three edges $\{(a, b), (c, d), (e, f)\}$ have weight 2 and the rest have weight zero. Hence each of the three perfect matchings, which picks exactly one of these edges, is a minimum weight perfect matching. In $G_2$, all edges have weight zero and every perfect matching is minimum weight. 

\begin{remark}
	\label{rem.brick}
	Both these graphs $G_1$ and $G_2$ are {\em bricks}: they are 3-connected and bicritical (i.e., the removal of any pair of vertices leaves a graph having a perfect matching). It turns out that bricks have no tight odd cuts; see \cite{LP.book} for details. 
\end{remark}

In the first phase, we will pick a minimum weight perfect matching, say $M$. Let $\beta = \min_{e \in M} \{x(e)\}$. For $0 < b \leq \beta$, let $x' = (x - b \cdot M)$. The next fact follows from Lemma \ref{lem.min-wt}. 

\begin{lemma}
	\label{lem.partial}
	For $\alpha = (1 - b)$, consider LP (\ref{eq.alpha}) with respect to $b$ and $x'$ defined above. Then the following hold:
	\begin{enumerate}
		\item Constraints \ref{const.1alpha} are all satisfied.
		\item Constraints \ref{const.2alpha} are satisfied for all tight odd $\alpha$-cuts.
	\end{enumerate}
\end{lemma}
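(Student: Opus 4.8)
The plan is to verify each of the two claims directly from the defining properties of $M$ established in Lemma \ref{lem.min-wt}, together with the choice $\beta = \min_{e \in M}\{x(e)\}$ and $b \le \beta$. Recall that $x' = (x - b \cdot M)$ means we subtract $b$ from $x_e$ for each $e \in M$ and leave all other components unchanged; since $b \le \beta \le x_e$ for every $e \in M$, the vector $x'$ is still nonnegative, so this operation is well-defined.

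\textbf{Constraints \ref{const.1alpha}.} Fix a vertex $v \in V$. Since $M$ is a perfect matching, exactly one edge of $\delta(v)$ lies in $M$, so the sum $\sum_{e \in \delta(v)} x'_e$ decreases by exactly $b$ relative to $\sum_{e \in \delta(v)} x_e = 1$. Hence $\sum_{e \in \delta(v)} x'_e = 1 - b = \alpha$, which is precisely constraint \ref{const.1alpha} for the value $\alpha = 1-b$. This is routine and should take only a line or two.

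\textbf{Constraints \ref{const.2alpha} for tight odd $\alpha$-cuts.} Let $S$ be any tight odd $\alpha$-cut, i.e. $x(\delta_x(S)) = 1$ by Definition \ref{def.tight-alpha}. By Lemma \ref{lem.min-wt}, $M$ crosses $S$ exactly once, so exactly one edge of $\delta(S)$ belongs to $M$, and therefore $\sum_{e \in \delta(S)} x'_e = x(\delta_x(S)) - b = 1 - b = \alpha$. Thus constraint \ref{const.2alpha} holds for $S$ — in fact with equality, so $S$ remains a tight odd $\alpha$-cut for $x'$ (with the new parameter). The main thing to be careful about here is simply to invoke Lemma \ref{lem.min-wt} in the right form: it guarantees $M$ crosses \emph{every} tight odd $\alpha$-cut exactly once, which is exactly what makes the subtraction remove precisely $b$ from the cut value. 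Note the lemma deliberately says nothing about non-tight odd cuts; those are where the real difficulty lies (the subtraction could in principle push $x(\delta_x(S))$ below $1-b$ if $M$ crosses $S$ three or more times), and handling them is presumably the business of a later phase of the argument, not of this lemma. So the only genuine obstacle — ensuring we never over-subtract on a constraint that is already at its floor — is dispatched entirely by the "crosses exactly once" conclusion of Lemma \ref{lem.min-wt}, and the proof of Lemma \ref{lem.partial} itself is a short bookkeeping computation.
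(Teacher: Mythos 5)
Your proof is correct and takes exactly the route the paper intends: the paper states Lemma~\ref{lem.partial} without a written proof, remarking only that it ``follows from Lemma~\ref{lem.min-wt},'' and your write-up supplies precisely the two short computations that remark elides --- the perfect-matching property of $M$ gives the degree constraints~\ref{const.1alpha}, and the crosses-exactly-once conclusion of Lemma~\ref{lem.min-wt} gives constraints~\ref{const.2alpha} for the tight odd $\alpha$-cuts. Your closing observation that non-tight odd cuts are deliberately excluded (and are handled later via $\gamma$) correctly identifies why the lemma is stated in this restricted form.
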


We next determine if for $b = \beta$, $x'$ is a $(1 - \beta)$-fractional perfect matching, i.e. if constraints \ref{const.2alpha} of LP (\ref{eq.alpha}) are satisfied for all odd cuts which are not tight odd $\alpha$-cuts. If so, $\beta \cdot M$ will be the first term of the decomposition and we will proceed to the next phase to  decompose $x'$. If not, there is a odd set $S$ which violates constraint \ref{const.2alpha}. This happens because $|\delta(S) \cap M| > 1$, i.e., $M$ picks more than one edge from this cut. Then, in going from $x$ to $x'$, such a cut loses $|\delta(S) \cap M| \cdot \beta$ and therefore drops below $1 - \beta$. Since $S$ is not a tight odd $\alpha$-cut, $x(\delta_x (S)) > 1$. 

These two cases are illustrated by $G_1$ and $G_2$, respectively. In $G_1$, for any of the three matchings picked, say $N$, $\beta = 1/3$ and $x' = x - \beta \cdot N$ is a $(1 - \beta)$-fractional perfect matching. In $G_2$, the proposed matching, say $M$, picks five edges from the cut of $S = \{ f, g, h, i, j \}$ and $\min_{e \in M} \{x_e\} = 1/3$. However, observe that $x' = (x - {1 \over 3} \cdot M)$ is not a ${2 \over 3}$-fractional perfect matching, since $S$ is violated. 

We will deal with the latter problem by finding an appropriate smaller coefficient for $M$. To find this coefficient, say $b$, we solve:
\[ 5 \cdot \left( {1 \over 3} - b \right) = 1 - b , \]
giving $b = 1/6$. Clearly, $S$ is not violated in $x' =  (x - {1 \over 6} \cdot M)$. Furthermore, since no other odd cut is violated, $x'$ is a ${5 \over 6}$-fractional perfect matching. Therefore, ${1 \over 6} \cdot M$ will be the first term of the decomposition and we will proceed to the next phase to decompose $x'$.

However, in general, another odd cut may be violated with respect to $x'$. The recourse is to find the most stringent cut, which will yield an appropriate smaller value of $b$. This is defined below by $\gamma$. In this definition, $\CS = \{\mbox{odd set} \ S \ \ s.t. \ \ |\delta_x(S) \cap M| > 1\}$.

\begin{equation}
	\label{eq.gamma-one}
	\gamma \ = \ \min_{S \in \CS} \ \ \left( {{x(\delta_x(S)) - 1} \over {|\delta_x(S) \cap M| - 1}} \right) .
\end{equation}

For $S \in \CS$, $|\delta_x(S) \cap M| > 1$. Therefore $S$ is not a tight odd $\alpha$-cut and $x(\delta_x(S)) > 1$. Hence $\gamma > 0$.

Finally we note that the procedure for decomposing $x'$ in the second phase is identical to that of   the first phase. This follows from our observation that $x'$ is an $\alpha$-fractional perfect matching if and only if ${1 \over \alpha} x'$ is a fractional perfect matching in $G$. As a consequence of this statement, Lemmas \ref{lem.once}, \ref{lem.min-wt} and \ref{lem.partial}  hold with respect to $x'$ as well.

\section{The Algorithm}
\label{sec.alg}

In this section, we will provide details about Algorithm \ref{alg.main}, which computes the decomposition of fractional perfect matching $x$. The input to the algorithm is $x$ and $\Lc$, the latter being a a maximal laminar family of tight odd $\alpha$-cuts in $G_x$. An efficient algorithm for finding $\Lc$ is given in Section \ref{sec.find-L}. Each iteration of Algorithm \ref{alg.main} is called a {\em phase}; it computes and outputs one term of the decomposition.

In order to keep notation simple, in Section \ref{sec.high} we described only the first phase of our algorithm; we now describe an arbitrary phase. It starts with an $\alpha$-fractional perfect matching, which we call $y$, for some $\alpha \in (0, 1]$. Steps 2(a) through 2(d) are self-explanatory. 

Step 2(e) and 2(f) are trying to determine if there is an odd cut in $(G_{y'}, y')$ having capacity $< (\alpha - \beta)$. The cut of each singleton is exactly $(\alpha - \beta)$. Therefore, in contrast to the algorithms given in Section \ref{sec.find-L}, the singleton cuts do not interfere with the main goal of Steps 2(e) and 2(f), and the algorithm of Padberg and Rao \cite{Padberg1982odd} suffices for Step 2(e) for finding a minimum odd cut in $(G_{y'}, y')$. If  the condition in Step 2(f) is true, $\beta \cdot M$ is a valid term; it is output and the algorithm moves to the next phase.

\bigskip

\setcounter{figure}{1} 

\begin{figure}

	\begin{wbox}
		\begin{alg}
		\label{alg.main}
		{\bf The Main Algorithm. \ Input = $(x, \ \Lc)$}\\
		\\
		\begin{enumerate}
			\item {\bf Initialization:} 
			 \begin{enumerate}
			 	\item $\alpha \leftarrow 1$.
			 	\item $y \leftarrow x$.
			     $\quad$
			 \end{enumerate}
					     $\quad$
					    
			\item {\bf While} $y \neq 0$ {\bf do:}
			$\quad$
			\begin{enumerate}
				$\quad$
			\item $\forall e \in E_y$: \ $w(e) \leftarrow |\{S \in \Lc \ | \ e \in \delta_y (S) \}|$.
			
						     $\quad$

			\item $M \leftarrow$  a minimum weight perfect matching in $G_y$ wrt. weights $w$.

				$\quad$
				
			\item  $\beta \leftarrow  \min_{e \in M} \{y_e\}$.  
			
						 $\quad$

			\item $y' \leftarrow (y - \beta \cdot M)$.
			
			$\quad$
			
			\item  $S \leftarrow$  a minimum odd cut in $(G_{y'}, y')$.   
			
			$\quad$
			
		\item	{\bf If} $y'(\delta_{y'} (S)) \geq (\alpha - \beta)$ {\bf then do:}
		
		$\quad$
		
			\begin{enumerate}
				\item {\bf Output} $\beta \cdot M$.
				\item $\alpha \leftarrow (\alpha - \beta)$. 
				\item $y \leftarrow y'$.
			\end{enumerate}
			
						     $\quad$

			\item {\bf Else do:}
			\begin{enumerate}
						 $\quad$

			\item Via a binary search in $[0, \beta]$, find $\gamma$ as defined in (\ref{eq.gamma-second}).
			\item {\bf Output} $\gamma \cdot M$.
			 \item $y \leftarrow (y - \gamma \cdot M)$.  
			 \item $\alpha \leftarrow (\alpha - \gamma)$. 	
			 \item $\Lc \leftarrow \ \mbox{update}(\Lc)$.		 
			 $\quad$
			\end{enumerate}

			\end{enumerate}

	\item {\bf end}.
		\end{enumerate} 
		\bigskip
		\end{alg}
	\end{wbox}
\end{figure} 

However, if the condition is false, then $y'$ is not an $(\alpha - \beta)$-fractional perfect matching. Then, as mentioned above, we need to find $\gamma$. For the first phase, $\gamma$ was defined in (\ref{eq.gamma-one}); for an arbitrary phase, it is defined below. The set $\CS$ was defined with (\ref{eq.gamma-one}) and again it is easy to see that $\gamma > 0$. 

\begin{equation}
	\label{eq.gamma-first}
	\gamma \ = \ \min_{S \in \CS} \ \ \left( {{x(\delta_x(S)) - \alpha} \over {|\delta_x(S) \cap M| - 1}} \right) .
\end{equation}

Let $S$ be the set which achieves the minimum in (\ref{eq.gamma-first}). Then $S$ is a newly created tight odd $\alpha$-cut with respect to $(y - \gamma \cdot M)$. 

One way of finding $\gamma$ is given in Procedure Find-Gamma, presented as Algorithm \ref{alg.one}. It starts with $\gamma = \beta$ and iteratively decreases $\gamma$ if some odd set is violated. This procedure does terminate with $\gamma$ but we were unable to place a polynomial upper bound on the number of iterations executed. On the other hand, via this procedure, we placed an upper bound on the growth of the bit-complexity of numbers. 

Below we give an equivalent definition of $\gamma$; we will use this to compute $\gamma$ via a binary search. The upper bound on bit-complexity will be used to show that binary search takes polynomial time. 

\begin{lemma}
	\label{lem.gamma-eq}
	The following is an equivalent definition of $\gamma$, which was defined in (\ref{eq.gamma-first}).
	\begin{equation}
\label{eq.gamma-second}
	\gamma \ = \ \arg \max_{b} \ \{(y - b \cdot M) \ \mbox{is a } (\alpha - b) \mbox{-fractional perfect matching} . \} 
	\end{equation}
\end{lemma}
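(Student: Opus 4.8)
The plan is to show that the two quantities agree by proving both inequalities, using the structure of the $\alpha$-fractional perfect matching constraints. Write $\gamma$ for the quantity defined in~(\ref{eq.gamma-first}) and $\gamma^\ast$ for the quantity defined in~(\ref{eq.gamma-second}). The first observation is that, by Lemma~\ref{lem.partial}, for every $b \in [0,\beta]$ the vector $y' = (y - b\cdot M)$ is nonnegative (since $\beta = \min_{e\in M}\{y_e\}$), satisfies all degree constraints~\ref{const.1alpha} at level $\alpha - b$, and satisfies constraint~\ref{const.2alpha} at level $\alpha - b$ for every tight odd $\alpha$-cut. Hence the \emph{only} constraints that can fail for $(y - b\cdot M)$ are the odd-set constraints~\ref{const.2alpha} for odd sets $S$ that are not tight odd $\alpha$-cuts. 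For such a set $S$, the value of the cut in $(y - b\cdot M)$ is $x(\delta_x(S)) - b\cdot|\delta_x(S)\cap M|$, which is a non-increasing affine function of $b$; it is an issue only if $|\delta_x(S)\cap M| \ge 2$, i.e.\ $S \in \CS$, and in that case the constraint $x(\delta_x(S)) - b\cdot|\delta_x(S)\cap M| \ge \alpha - b$ is equivalent to $b \le (x(\delta_x(S)) - \alpha)/(|\delta_x(S)\cap M| - 1)$.

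From this, the ``$\le$'' direction is immediate: for $b = \gamma$, every $S \in \CS$ satisfies $\gamma \le (x(\delta_x(S)) - \alpha)/(|\delta_x(S)\cap M| - 1)$ by the definition of $\gamma$ as the minimum, and every odd set not in $\CS \cup \{\text{tight odd }\alpha\text{-cuts}\}$ has $|\delta_x(S)\cap M| \le 1$ so its constraint cannot be violated for any $b \ge 0$ (if $|\delta_x(S)\cap M| = 0$ the cut value is unchanged and was $\ge \alpha$; if $=1$ it decreases by $b$, exactly matching the right-hand side). Therefore $(y - \gamma\cdot M)$ is an $(\alpha - \gamma)$-fractional perfect matching, so $\gamma^\ast \ge \gamma$. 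For the ``$\ge$'' direction, note that the set $S_0 \in \CS$ achieving the minimum in~(\ref{eq.gamma-first}) gives a constraint that is violated for every $b > \gamma$, so no $b > \gamma$ makes $(y - b\cdot M)$ feasible; combined with the fact that the set of feasible $b$ is a closed interval $[0,\gamma]$ (intersection of closed half-lines with $[0,\beta]$, and note $\gamma \le \beta$ since $b=\beta$ would have to be feasible if the If-branch were taken — but here we are in the Else-branch, and one checks $\gamma < \beta$), the $\arg\max$ is exactly $\gamma$, giving $\gamma^\ast \le \gamma$.

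One subtlety to handle carefully is the bookkeeping between the support graph $G_x$ with weights $x$ in formula~(\ref{eq.gamma-first}) and the actual current vector $y$ with level $\alpha$ in an arbitrary phase; I would state at the outset that within the current phase we identify $x$ with $y$ and $1$ with $\alpha$ (as the text does when reusing the notation of Section~\ref{sec.high}), so that the displayed numerator $x(\delta_x(S)) - \alpha$ really means $y(\delta_y(S)) - \alpha$. A second point worth a line is that the $\arg\max$ in~(\ref{eq.gamma-second}) is well-defined, i.e.\ the set of valid $b$ is nonempty and attains its supremum: nonemptiness holds because $b = 0$ is always valid ($y$ itself is an $\alpha$-fractional perfect matching), and attainment holds because each defining inequality is non-strict, so the feasible set is closed and bounded. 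The main obstacle is not any single deep step but rather making the case analysis over odd sets exhaustive and airtight — in particular verifying that sets $S$ with $|\delta_x(S)\cap M| \le 1$ genuinely never obstruct feasibility, which is where Lemma~\ref{lem.min-wt} (every tight odd $\alpha$-cut is crossed exactly once by $M$) does the real work by ensuring the problematic sets in $\CS$ are precisely the non-tight ones.
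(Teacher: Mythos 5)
Your proof is correct and takes essentially the same approach as the paper: establish that $(y - \gamma\cdot M)$ is feasible (so $\gamma$ lies in the feasible set of $b$) and that the odd set achieving the minimum in~(\ref{eq.gamma-first}) gives a constraint that becomes tight at $b = \gamma$ and is violated for any larger $b$. The paper's proof is much terser---it dispatches the feasibility half with a ``Clearly''---so your explicit case analysis on $|\delta_x(S)\cap M|$, the reduction of each obstructing constraint to a linear inequality in $b$, and the check that $\gamma \le \beta$ in the Else-branch are all welcome amplifications of the same argument.
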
 

\begin{proof}
	Clearly, $\gamma$ as defined in (\ref{eq.gamma-first}) satisfies: for every odd set $T$, \ $x'(\delta_{x'} (T)) \geq (\alpha - \gamma)$, where $x' = (x - \gamma \cdot M)$. Therefore $x'$ is a $(\alpha - \gamma)$-fractional perfect matching. Furthermore, by (\ref{eq.gamma-first}), there is an odd set $S$ such that $x'(\delta_{x'} (S)) = (\alpha - \gamma)$. Note that $S$ is a newly created tight odd $\alpha$-cut, as stated above. Therefore $\gamma$ is the smallest smallest value of $b$ for which the condition in (\ref{eq.gamma-second}) holds.
\end{proof}

Next we describe the binary search conducted in Step 2(f) to we compute $\gamma$. Let $b$ denote the parameter on which we conduct the search in the range $[0, \beta]$. For each value of $b$, we will use the Padberg-Rao algorithm to find a minimum odd cut in $(G_{y'} , y')$, where $y' = (y - b \cdot M)$. If the capacity of this cut is $< (\alpha - b)$, then $b$ needs to be reduced further. 

As stated right after (\ref{eq.gamma-first}), with respect to $(y - \gamma \cdot M)$, a new tight odd $\alpha$-cut of value $(\alpha - \gamma)$ gets created. Although this cut may cross some cuts in $\Lc$, there is a new cut that can be added to $\Lc$ maintaining laminarity. The operation of update$(\Lc)$ in Step 2(g)(v), adds a maximal set of new odd tight $\alpha$-cuts to $\Lc$ while still maintaining laminarity. The procedure given in Section \ref{sec.find-L} shows how to do this.

\subsection{An Algorithm for finding $\Lc$}
\label{sec.find-L}

Given an $\alpha$-fractional perfect matching, $x$, we first show how to check, in polynomial time, if $(G_x, x)$ has a tight odd $\alpha$-cut, and if so, to find one. We then use this as a subroutine to find a maximal laminar family of such cuts. 

For the first task, the main difficulty is caused by singleton cuts, i.e., those containing one vertex. Each such cut has a capacity of $\alpha$, and running the Padberg-Rao algorithm in a straightforward manner on $(G_x, x)$ may yield a singleton cut and not a tight odd $\alpha$-cut.

 The following idea helps get around this problem. Consider a tight odd $\alpha$-cut, $S$, and let $e$ and $f$ be two edges in $\delta_x (S)$ such that $e$ and $f$ form a matching. Let $e = (a, b)$ and $f = (c, d)$. Obtain $y$ from $x$ by subtracting $\epsilon$ from $x_e$ and $x_f$, where $\epsilon = {1 \over {Dn}}$, where $D$ is a bound on the largest denominator encountered by Algorithm \ref{alg.main}. Such a bound is established in Lemma \ref{lem.largest}. 
 
 With respect to $y$, the smallest singleton cuts are given by the four vertices $a, b, c$ and $d$, and each has a capacity of $\alpha - \epsilon$. On the other hand, the capacity of the cut $\delta_y (S)$ is $\alpha - 2 \epsilon$. There may be other tight odd $\alpha$-cuts with respect to $x$ which have capacity $\alpha - 2 \epsilon$ with respect to $y$, but no odd cut has a lower capacity. Therefore, the Padberg-Rao algorithm is forced to find a tight odd $\alpha$-cut.
 
 {\bf Algorithm for finding a tight odd $\alpha$-cut:} For each pair of edges $e$ and $f$ which form a matching in $G_x$, reduce their values by $\epsilon$, which is defined above, to obtain $y$. Using the Padberg-Rao algorithm, find a minimum odd cut with respect to $y$. Finally, among all odd cuts found in this manner, output the smallest capacity cut.

{\bf Algorithm for finding a maximal laminar family of tight odd $\alpha$-cuts:} We initialize $\Lc$ with the odd cut found by the algorithm stated above and iterate the following steps.  

Let $\Lc$ be the laminar family of cuts found so far. Let us represent the cuts in $\Lc$, together with the entire set $V$, i.e., $\{V\} \cup \Lc$, as a tree $\cT$ with a node for each set and with $S'$ a child of $S$ if $S' \subset S$ and there is no set $S''$ in this collection such that $S' \subset S'' \subset S$. 

For each $S \in \Lc$, we want to know if there is a new tight odd $\alpha$-cut, say $T \subset S$ where $T \not \subset S'$ for $S' \in \Lc$ and $S' \subset S$ and such that $\Lc \cup \{T\}$ is laminar. If so, $T$ will be added to $\Lc$ and be made a child of $S$ per the above-stated rules for constructing $\cT$. This is accomplished by shrinking $V - S$ to one vertex and for each $S'$ that is a child of $S$, we shrink $S'$ to a distinct vertex. In this graph, we run the above-stated algorithm. If we find a new odd cut, $T$, we make it a child of $S$ in the tree. Repeating this for all nodes of the tree, including the new nodes created, we obtain a maximal laminar family of tight odd $\alpha$-cuts. 

\begin{lemma}
	\label{lem.time-find-L}
	The algorithm for finding a maximal laminar family of tight odd $\alpha$-cuts requires $O(n m^2)$ calls to the Padberg-Rao algorithm.
\end{lemma}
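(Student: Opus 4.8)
The plan is to bound the total number of Padberg--Rao calls as the product of two quantities: (i) the number of times the single-cut procedure (the ``Algorithm for finding a tight odd $\alpha$-cut'') is invoked over the whole run, and (ii) the number of Padberg--Rao calls made within one such invocation.

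For (ii), one invocation iterates over all unordered pairs $\{e,f\}$ of edges of the current graph that form a matching, and for each it makes a single Padberg--Rao call on the perturbed instance. Since shrinking vertices never creates new edges, the current graph always has at most $m$ edges, so there are at most $\binom{m}{2}=O(m^2)$ such pairs, hence $O(m^2)$ Padberg--Rao calls per invocation. (The initial call used to seed $\Lc$ is likewise one invocation.)

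For (i), the crucial point is that throughout the execution $\Lc$ is a laminar family of subsets of the $n$-vertex ground set $V$, and any laminar family over an $n$-element set has at most $2n-1$ members; hence $|\Lc|=O(n)$ at all times, the tree $\cT$ on $\{V\}\cup\Lc$ always has $O(n)$ nodes, and $\Lc$ only ever grows. Now classify the invocations of the single-cut procedure by the node $S\in\cT$ being processed. Each invocation either returns a \emph{new} tight odd $\alpha$-cut $T$, which is then permanently added to $\Lc$ as a child of $S$, or returns nothing. In the first case $|\Lc|$ strictly increases, so there are $O(n)$ such invocations in total. In the second case we are ``finalizing'' the current state of $S$; since $S$ must be re-examined each time a new child is attached to it, the procedure is invoked on $S$ at most $c(S)+1$ times, where $c(S)$ is the number of children of $S$ in the final tree (one invocation discovering each child, plus one final fruitless invocation). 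Summing over all nodes, the number of invocations is at most $\sum_{S\in\cT}(c(S)+1)=|\Lc|+|\{V\}\cup\Lc|=2|\Lc|+1=O(n)$.

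Multiplying (i) and (ii) gives $O(n)\cdot O(m^2)=O(nm^2)$ Padberg--Rao calls, as claimed. I expect the main obstacle to be making step (i) fully rigorous: one has to argue precisely how often the single-cut procedure is re-invoked on a node after a new child is attached, and verify that processing the newly created nodes (which is what makes the resulting family \emph{maximal}) is already absorbed by the $\sum_S(c(S)+1)$ bookkeeping above; the laminar-family size bound itself is routine but should be stated explicitly.
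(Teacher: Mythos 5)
Your proof is correct and follows the same two-factor decomposition as the paper's: $O(m^2)$ Padberg--Rao calls per invocation of the single-cut subroutine, times $O(n)$ invocations over the whole run. The paper's justification for the second factor is terser: it simply says the $O(m^2)$ work ``is repeated for each node of tree $\cT$'' and that $\cT$ has at most $n$ nodes (in fact at most $n/2$, since $|\Lc|\le n/2-1$ by the argument in Lemma~\ref{lem.phases}). You correctly point out that this phrasing undercounts slightly, since a node $S$ must be re-examined after each new child is shrunk into the working graph, so the subroutine is actually run $c(S)+1$ times on $S$ rather than once; your $\sum_S\bigl(c(S)+1\bigr) = 2|\Lc|+1 = O(n)$ accounting (using the routine laminar-family size bound in place of the paper's $n/2-1$ bound) closes that gap cleanly while reaching the same $O(n)$ invocation count and hence the same $O(nm^2)$ total. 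So the route is essentially the paper's, just with the bookkeeping you yourself flagged as the potential obstacle actually carried out.
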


\begin{proof}
	Our algorithm for finding one tight odd $\alpha$-cut makes a call to the Padberg-Rao algorithm for each choice of the two edges $e$ and $f$, and therefore requires $O(m^2)$ calls. This is repeated for each node of tree $\cT$. Since $\cT$ has at most $n$ nodes, the lemma follows. 
\end{proof}

The operation of update$(\Lc)$ is performed in exactly the same way, i.e., we start with $\Lc$, construct its tree $\cT$ and find all new odd cuts that can be added to $\Lc$ and as nodes in $\cT$. Clearly, the time bound given in Lemma \ref{lem.time-find-L} applies to this procedure as well.

\subsection{Procedure Find-Gamma}
\label{sec.subroutine}

 A phase in Algorithm \ref{alg.main} can end in either Step 2(f) or 2(g). The growth of denominators of components of vector $y$ is different for these two cases; the first is straightforward to quantify and the second is done via Procedure Find-Gamma. This procedure is not called from Algorithm \ref{alg.main}. However, for the purpose of quantifying growth in the second case, we will call it in Step 2(g)(i) right after binary search has been executed. 
 
 The following input is provided to the procedure: $(\alpha, y, M, \beta, S)$, where $y$ is the $\alpha$-fractional perfect matching which Algorithm \ref{alg.main} was attempting to decompose in the current phase, and $M, \beta, S$ were computed in Steps 2(b), 2(c) and 2(e), respectively. 
 
 This procedure needs to find a minimum odd cut in Steps 1(e) and 2(d). As in Step 2(e) of Algorithm \ref{alg.main}, this is done using the algorithm of Padberg and Rao \cite{Padberg1982odd}.

\begin{lemma}
	\label{lem.gamma}
	Procedure Find-Gamma terminates and its output is precisely $\gamma$, as defined in (\ref{eq.gamma-first}). 
\end{lemma}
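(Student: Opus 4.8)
The plan is to establish the two claims separately: that Procedure Find-Gamma halts, and that its output equals $\gamma$ as defined in (\ref{eq.gamma-first}). Recall the behaviour of the procedure. It sets $\gamma \leftarrow \beta = \min_{e \in M}\{y_e\}$ and then repeats the following loop: using the Padberg--Rao algorithm, find a minimum odd cut $S$ in $(G_{y - \gamma M}, y - \gamma M)$; if the capacity of $\delta(S)$ there is at least $\alpha - \gamma$, halt and output the current $\gamma$; otherwise $S$ is violated, reset $\gamma$ to the value at which $\delta(S)$ becomes tight, namely $\gamma \leftarrow (y(\delta_y(S)) - \alpha)/(|\delta_y(S) \cap M| - 1)$, and loop again. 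Write $\gamma^{\ast}$ for the quantity on the right-hand side of (\ref{eq.gamma-first}), instantiated for the current phase, i.e. $\gamma^{\ast} = \min_{S \in \CS}\ (y(\delta_y(S)) - \alpha)/(|\delta_y(S) \cap M| - 1)$; the goal is $\gamma = \gamma^{\ast}$ at termination.

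I would begin with a structural observation. Since $\gamma \le \beta$, subtracting $\gamma \cdot M$ keeps all components of $y$ nonnegative, and for $\gamma < \beta$ it zeroes no edge of $M$ and touches no other edge, so the support is unchanged, $E_{y - \gamma M} = E_y$; hence $y(\delta_y(S)) = \sum_{e \in \delta_y(S)} y_e$ and $|\delta_y(S) \cap M|$ are independent of $\gamma$, and the capacity of $\delta(S)$ in $(G_{y - \gamma M}, y - \gamma M)$ is $y(\delta_y(S)) - \gamma\,|\delta_y(S) \cap M|$. From this: (i) if $S$ is violated at the current $\gamma$, i.e. $y(\delta_y(S)) - \gamma\,|\delta_y(S) \cap M| < \alpha - \gamma$, then $y(\delta_y(S)) - \alpha < \gamma(|\delta_y(S) \cap M| - 1)$; since $y$ is an $\alpha$-fractional perfect matching $y(\delta_y(S)) \ge \alpha$, and $M$, being a perfect matching, crosses the odd set $S$ at least once, forcing $|\delta_y(S) \cap M| \ge 2$, so $S \in \CS$, the reset value is well defined, and it is strictly below $\gamma$ -- hence $\gamma$ strictly decreases every iteration (singletons cause no trouble: each has capacity exactly $\alpha - \gamma$ because $M$ covers every vertex once, so a violated cut automatically has $|S| \ge 3$); (ii) once $\delta(S)$ has been made tight, its capacity is $\ge \alpha - \gamma$ for all smaller $\gamma$, so $S$ is never violated -- hence never picked as the reset cut -- again. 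As there are at most $2^{n}$ odd sets and each iteration uses a fresh one, the loop terminates.

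For the output value I would prove $\gamma = \gamma^{\ast}$ via two inequalities. When the procedure halts, the minimum odd cut of $(G_{y - \gamma M}, y - \gamma M)$ has capacity $\ge \alpha - \gamma$, so $(y - \gamma M)(\delta(T)) \ge \alpha - \gamma$ for every odd set $T$; combined with the degree constraints -- each equal to $\alpha - \gamma$ since $M$ covers every vertex exactly once (cf. Lemma \ref{lem.partial}) -- and nonnegativity ($\gamma \le \beta$), this says $y - \gamma M$ is an $(\alpha - \gamma)$-fractional perfect matching, whence by Lemma \ref{lem.gamma-eq} (which identifies $\gamma^{\ast}$ with the largest $b$ such that $y - bM$ is an $(\alpha - b)$-fractional perfect matching) $\gamma \le \gamma^{\ast}$. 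Conversely, I would carry the invariant $\gamma \ge \gamma^{\ast}$ through the loop: it holds initially because the cut $S$ handed to the procedure -- the minimum odd cut of Step 2(e) of Algorithm \ref{alg.main}, which is violated at $\gamma = \beta$ -- lies in $\CS$ by (i), so $\gamma^{\ast} \le (y(\delta_y(S)) - \alpha)/(|\delta_y(S) \cap M| - 1) < \beta = \gamma$; and it is preserved since each reset moves $\gamma$ to $(y(\delta_y(S)) - \alpha)/(|\delta_y(S) \cap M| - 1)$ for some $S \in \CS$, a value at least $\min_{S' \in \CS}(\cdots) = \gamma^{\ast}$. Hence $\gamma \ge \gamma^{\ast}$ at halting too, so $\gamma = \gamma^{\ast}$, which is the desired conclusion; by Lemma \ref{lem.gamma-eq} this also matches the quantity (\ref{eq.gamma-second}) computed by the binary search in Algorithm \ref{alg.main}.

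The rearrangements and the degree-constraint check are routine; the one genuinely delicate point is termination, for which -- as the paper stresses -- no polynomial bound is available. I would therefore lean on the pure combinatorial finiteness of the family of odd sets, which in turn rests on the stability of the support graph $G_{y - \gamma M}$ for $\gamma \in [0, \beta)$: this stability is precisely what guarantees that a cut, once made tight, stays satisfied and so is never revisited, ruling out cycling. A secondary point requiring care is confirming that each iteration resets $\gamma$ to exactly the ratio appearing in (\ref{eq.gamma-first}) and not to any smaller value, since that is what makes the invariant $\gamma \ge \gamma^{\ast}$ hold exactly; this is where the precise form of Procedure Find-Gamma (Algorithm \ref{alg.one}) enters.
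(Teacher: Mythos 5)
Your proof is correct and follows essentially the same route as the paper's: both establish termination by showing $\gamma$ strictly decreases across iterations and that a cut, once brought to tightness, is never violated again at smaller $\gamma$ (so cuts are never revisited), and both identify the output with $\gamma$ in (\ref{eq.gamma-first}) by observing that the final $y - \gamma M$ satisfies all odd-cut constraints while some cut is exactly tight. Your write-up is somewhat more explicit than the paper's --- e.g. you spell out the support-stability observation, the bound $|\delta_y(S)\cap M|\ge 2$ for violated cuts, the non-revisit property, and the two-sided inequality $\gamma\le\gamma^{\ast}$ and $\gamma\ge\gamma^{\ast}$ invoking Lemma~\ref{lem.gamma-eq} --- but the underlying argument is the same.
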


\begin{proof}
Consider the cuts computed in Step 2(d) in two successive iterations and say they are $S$ and $S'$, respectively. Assume that $|\delta_y(S) \cap M| = k$ and $|\delta_y(S') \cap M| = k'$. Let $b$ be the value of $\gamma$ in the iteration in which $S$ was found to be the minimum odd cut. Since in this iteration $S'$ was not picked as the minimum odd cut, 
	\begin{equation}
		\label{eq.subroutine}
			 \delta_y (S') - k' \cdot b \geq \delta_y (S) - k \cdot b  = \alpha - b.
	\end{equation}
	Let $c$ be the value of $\gamma$ in the next iteration, in which $S'$ was found to be the minimum odd cut. Therefore, $\delta(S') - k' \cdot c = \alpha - c$. Hence, $c < b$. This shows that the value of $\gamma$ computed in Step 2(b) is monotonically decreasing. Since the total number of cuts in $G_y$ is finite, the procedure must terminate.

Let $T'$ be the odd cut which makes the condition in the While-loop, in Step 2 of the procedure turn out to be true for the last time and let $T''$ be the odd cut found subsequently in Step 2(d); observe that $T'$ may equal $T''$. Let $k = |\delta_y (T') \cap M|$, as computed in Step 2(a). Then $\gamma$, as computed in Step 2(b), satisfies
\[ \delta_y (T') - k \gamma \ = \ \alpha - \gamma .\]
Therefore, $y'(\delta_{y'} (T')) = \alpha - \gamma$. The next time around, the condition in the While-loop, in Step 2, turns out to be false. Therefore, for all odd cuts $S$, $y'(\delta_{y'} (S)) \geq \alpha - \gamma$. Hence $\gamma$ is precisely as defined in (\ref{eq.gamma-first}).
\end{proof}

\bigskip

\setcounter{figure}{1} 

\begin{figure}

	\begin{wbox}
		\begin{alg}
		\label{alg.one}
		{\bf Procedure Find-Gamma.	\ Input = $(\alpha, y, M, \beta, S)$}
		\\
		\begin{enumerate}
			\item {\bf Initialization:} 
				 \begin{enumerate}
			 	\item $T \leftarrow S$.
			 	\item $k \leftarrow  |\delta_y (T) \cap M|$.
			 	\item $\gamma \leftarrow \left({{y(\delta_y (T)) - 1} \over {k-1}} \right)$. 
			 	\item $y' \leftarrow (y - \gamma \cdot M)$.
			 	\item $T \leftarrow$  a minimum odd cut in $(G_{y'}, y')$. 
			     $\quad$
			 \end{enumerate}
					     $\quad$

			\item  {\bf While} $y'(\delta_{y'} (T)) < \alpha - \gamma$ {\bf do:} 
			
						     $\quad$
						     
				\begin{enumerate}
			 	\item $k \leftarrow  |\delta_y (T) \cap M|$.
			 	\item $\gamma \leftarrow \left({{y(\delta_y (T)) - \alpha} \over {k-1}} \right)$. 
			 	\item $y' \leftarrow (y - \gamma \cdot M)$.
			 	\item $T \leftarrow$  a minimum odd cut in $(G_{y'}, y')$. 
			     $\quad$
			 \end{enumerate}
			 
			 $\quad$
			 
			 \item {\bf end.}

			$\quad$
			
			\item  {\bf Output} $(\gamma)$ 
			
						     $\quad$
						     
		\end{enumerate} 
		\bigskip
		\end{alg}
	\end{wbox}
\end{figure} 

\begin{lemma}
	\label{lem.denominators}
	Assume that every denominator occurring in $\alpha$ and $y$, which were input to Find-Gamma, is $b$. Then the denominator of $\gamma$ output by the procedure is bounded by $b \cdot {n}$.
\end{lemma}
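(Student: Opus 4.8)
## Proof Proposal for Lemma \ref{lem.denominators}

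The plan is to trace how $\gamma$ is produced inside Procedure Find-Gamma and bound the denominator of the single arithmetic expression that ultimately defines the returned value. By Lemma \ref{lem.gamma}, the output $\gamma$ equals the value computed in the last assignment to $\gamma$ before the While-loop condition becomes false; this assignment is either Step 1(c) (if the loop body never executes) or Step 2(b) (otherwise). In both cases $\gamma$ has the form
\[
\gamma \;=\; \frac{\,y(\delta_y(T)) - r\,}{\,k - 1\,},
\]
where $r \in \{1, \alpha\}$, $T$ is some odd set in $G_y$, and $k = |\delta_y(T) \cap M| \ge 2$ is a positive integer. The key point is that $T$, $k$, $r$ are all determined by the input data $(\alpha, y, M, \beta, S)$ — the intermediate quantities $y'$ are used only to decide which cut $T$ is selected, not to build $\gamma$ itself — so the numerator $y(\delta_y(T)) - r$ is a sum of at most $n/2$ of the original components of $y$, minus $1$ or minus $\alpha$.

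First I would observe that, by hypothesis, every component $y_e$ and the number $\alpha$ have denominator dividing $b$, i.e. each can be written with denominator exactly $b$. Hence $y(\delta_y(T))$ is a sum of such numbers and can be written with denominator $b$; subtracting $r \in \{1,\alpha\}$ keeps the denominator dividing $b$. So the numerator $y(\delta_y(T)) - r$ is a rational of the form $p/b$ for some integer $p$. Then
\[
\gamma \;=\; \frac{p/b}{\,k-1\,} \;=\; \frac{p}{\,b\,(k-1)\,},
\]
whose denominator (after reduction) divides $b(k-1)$. Since $k = |\delta_y(T)\cap M| \le |M| = n/2$, we get $k - 1 \le n/2 - 1 < n$, so the denominator of $\gamma$ is bounded by $b(k-1) < b\,n$. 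This gives the claimed bound $b \cdot n$.

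The only subtle point — and the step I would be most careful about — is justifying that the returned $\gamma$ really is given by one of these two closed-form expressions with $T$ an odd set in $G_y$ (not in some modified support graph) and with $k \ge 2$ an integer. This is exactly what Lemma \ref{lem.gamma} establishes: the procedure terminates, and the final value of $\gamma$ is the one set in Step 1(c) or the last execution of Step 2(b), at which point $y(\delta_y(T')) - k\gamma = \alpha - \gamma$ with $k = |\delta_y(T')\cap M|$; solving this for $\gamma$ reproduces the fraction above with $r = \alpha$ (or $r=1$ in the degenerate Step 1(c) case where the first-phase formula is used). The positivity argument noted after \eqref{eq.gamma-first} guarantees $k \ge 2$, so $k-1 \ge 1$ and the division is well-defined. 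Everything else is the routine denominator bookkeeping sketched above, so I do not anticipate a genuine obstacle here; the lemma is essentially a corollary of Lemma \ref{lem.gamma} together with the trivial bound $|M| = n/2$.
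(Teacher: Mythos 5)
Your proof is correct and takes essentially the same approach as the paper's: identify the closed-form expression for the final value of $\gamma$, note that the numerator $y(\delta_y(T)) - r$ has denominator $b$, and bound the extra factor by $k - 1 < n/2 \le n$. The only difference is that you explicitly handle the degenerate case where the While-loop body never executes and the output comes from Step 1(c), whereas the paper only considers "the very last execution of Step 2(b)"; your version is marginally more careful but the argument is identical in substance.
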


\begin{proof}
Consider the very last execution of Step 2(b); this computes $\gamma$ which is output in Step 4. If $T$ is the odd set at this stage and $k = |\delta_y (T) \cap M|$, then 
\[ \gamma =  \left({{y(\delta_y (T)) - \alpha} \over {k-1}} \right) .\] 

Clearly the denominator of $(y(\delta_y (T)) - \alpha)$ is $b$. Since $k \leq |M| = {n \over 2}$, the lemma follows. 
\end{proof}

\subsection{Upper Bounding Running Time and the Number of Terms}
\label{sec.time}

\begin{lemma}
	\label{lem.phases}
	The number of phases executed by Algorithm \ref{alg.main} is at most $m$, where $m$ is defined in Notation \ref{not.edges}. 
\end{lemma}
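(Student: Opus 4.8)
The plan is to track, in each phase of Algorithm \ref{alg.main}, some monotone quantity that decreases by at least one per phase and is bounded by $m$ throughout. The natural candidate is the number of edges in the support graph, i.e. $|E_y|$, together with an accounting for the tight-odd-$\alpha$-cut structure. I would split the argument according to which branch of Step 2 each phase takes. In a phase that ends via Step 2(f) (the ``good'' case), $\beta = \min_{e \in M}\{y_e\}$ is subtracted along $M$, so at least one edge of $M$ has its value driven to zero and leaves the support; hence $|E_{y'}| \leq |E_y| - 1$. So every Step-2(f) phase strictly decreases $|E_y|$, and there can be at most $m$ such phases since $|E_x| = m$ initially and $|E_y| \geq 0$ always (in fact $|E_y| \geq n/2$ until termination, but we only need nonnegativity).

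The subtler case is a phase that ends via Step 2(g), where only $\gamma < \beta$ is subtracted, and it is possible that no edge of $M$ reaches zero, so $|E_y|$ need not drop. Here I would instead use the laminar family $\Lc$ as the progress measure: by the discussion following equation (\ref{eq.gamma-first}), the set $S$ achieving the minimum in (\ref{eq.gamma-first}) becomes a \emph{newly created} tight odd $\alpha$-cut with respect to $(y - \gamma \cdot M)$, and the $\mathrm{update}(\Lc)$ operation in Step 2(g)(v) adds at least one new cut to $\Lc$ while preserving laminarity. So every Step-2(g) phase strictly increases $|\Lc|$. Since $\Lc$ is a laminar family of odd subsets of $V$ (each of size $\geq 3$), its cardinality is bounded — by the standard bound $|\Lc| \leq n - 1$, or more crudely by $|\Lc| \leq m$ since the cuts in $\Lc$ are distinct and each is determined by a nonempty edge set $\delta_y(S) \subseteq E_y$. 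Hence there are at most $O(n)$, and certainly at most $m$, Step-2(g) phases in total. Actually I would want the cleaner combined count, so I'd argue that the potential $\Phi(y) = |E_y| + (\text{something like } n - |\Lc|)$ — or more simply, just add the two separate bounds — gives at most $m$ phases overall; I need to check that a Step-2(g) phase does not \emph{increase} $|E_y|$ (it cannot, since we only subtract along $M$) and that a Step-2(f) phase does not decrease $|\Lc|$ (it cannot, since $\Lc$ is untouched there and all cuts in $\Lc$ remain tight odd $\alpha$-cuts by Lemma \ref{lem.partial}(2)), so the two potentials move monotonically in the right directions across \emph{all} phases.

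The main obstacle I anticipate is getting the combined bound to come out to exactly $m$ rather than something like $m + O(n)$. If I bound Step-2(g) phases by $|\Lc| \leq n-1$ and Step-2(f) phases by $m$, I only get $m + n - 1$. To sharpen this I would note that a cut $S$ newly added to $\Lc$ in a Step-2(g) phase has $|\delta_y(S) \cap M| \geq 2$, and once $S$ is a tight odd $\alpha$-cut Lemma \ref{lem.min-wt} forces every subsequently chosen $M$ to cross it exactly once — so the edges of $\delta_y(S)$ are ``consumed'' in a controlled way. The cleanest route is probably: in a Step-2(g) phase, even though no edge of $M$ necessarily leaves the support of $y$ globally, the newly tight cut $S$ together with Lemma \ref{lem.once}/\ref{lem.min-wt} constrains future matchings, and one can charge the phase to an edge that will leave the support in a later Step-2(f) phase associated with $S$. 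If that charging argument is fiddly, the fallback is to accept the slightly weaker $O(m + n) = O(m)$ bound and note separately (as the paper already claims for the \emph{number of terms}) that the final count of terms is $\leq m$ — but since the lemma as stated asks for $m$, I would push the charging argument through, using that each Step-2(g) phase creates a cut that is ``responsible'' for at least one edge deletion not already counted.
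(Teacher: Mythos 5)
Your framework is exactly the paper's: classify each phase as Type~1 (ends via Step~2(f), drops at least one edge from the support) or Type~2 (ends via Step~2(g), adds at least one cut to $\Lc$), and bound the two counts separately. You also correctly identify the obstacle --- adding a bound of $m$ on Type~1 phases to a bound of $n-1$ on Type~2 phases gives $m + n - 1$, not $m$ --- but neither the ``charging'' argument you gesture at nor the fallback to $O(m)$ is what the paper does, and $O(m)$ is weaker than the stated lemma. The paper closes the gap with two sharpenings, both of which you missed.

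\emph{First sharpening: $|\Lc| \le n/2 - 1$, not $n-1$.} A general laminar family over an $n$-element ground set has at most $n-1$ nontrivial members, but here $\Lc$ consists of \emph{odd} sets of size $\ge 3$, and that restriction halves the bound. The paper proves this by enhancing the tree $\cT$ over $\Lc \cup \{V\}$ with a singleton child for each vertex not covered by a smaller set. A parity count shows every node $C \in \Lc$ and also the root must have at least three (resp.\ two) children in this enhanced tree, which lets one assign two distinct vertices to each $C \in \Lc$ while leaving two at the root; hence $|\Lc| \le n/2 - 1$.

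\emph{Second sharpening: the last Type~1 phase drops $n/2$ edges, not one.} The final phase terminates the algorithm with $y' = 0$, so the entire remaining perfect matching --- $n/2$ edges --- vanishes from the support at once. Since Type~2 phases drop no edges (because $\gamma < \beta$ strictly, so $y_e - \gamma > 0$ for every $e \in M$), the $m$ support edges are consumed entirely by Type~1 phases, one per phase except $n/2$ in the last. Thus the number of Type~1 phases is at most $m - n/2 + 1$. Combining, $(m - n/2 + 1) + (n/2 - 1) = m$, and the arithmetic closes exactly. Without both refinements your bound stays stuck at $m + O(n)$.

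Your observation that Lemma~\ref{lem.partial}(2) guarantees old cuts remain tight (so $\Lc$ grows monotonically across phases) is correct and implicitly needed, but it alone does not rescue the count.
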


\begin{proof}
Let us say that a phase of Algorithm \ref{alg.main} is of {\em Type 1} if the condition in Step 2(f) is true and of {\em Type 2} otherwise. As in the BvN Algorithm, after each Type 1 phase, at least one edge is dropped. Furthermore, the very last phase must be of Type 1 and after this phase, an entire perfect matching, i.e., ${n \over 2}$ edges, are dropped. Therefore, the number of Type 1 phases is $m - {n \over 2} + 1$.

On the other hand, after each Type 2 phase, at least one more tight odd $\alpha$-cut is created and all previous tight odd $\alpha$-cuts are retained. We now show that a maximal laminar family of tight odd $\alpha$-cuts can contain at most ${n \over 2} - 1$ cuts. For this purpose, we will enhance the tree $\cT$ corresponding to $\Lc$, defined in Section \ref{sec.find-L}, as follows: For each $C \in (\Lc \cup \{V\})$, let $V_C \subset V$ be the set of vertices which are in $C$ but not in any $C' \in \Lc$ such that $C' \subset C$. For each $v \in V_C$, make $v$ a child of $C$. Repeating this for all $C \in (\Lc \cup \{V\})$ we get the enhanced tree $\cT'$.  

Next observe that if $C \in \Lc$ is a leaf of tree $\cT$, then $C$ must have at least three children in tree $\cT'$, all vertices. Furthermore, each internal node $c$ of $\cT$ has at least three children, which may be odd sets or singleton vertices. The following is a way of assigning at lest two distinct vertices to each $C \in \Lc$: For each leaf $C$ of $\cT$, of the three vertices which are children of $C$ in $\cT'$, assign any two to $C$ and carry the third to the parent of $C$. Each internal node $C$ of $\cT$, $C$ receives one vertex from each child which is an odd set and for each child that is a singleton, $C$ will use this vertex as follows: of these three vertex, two are assigned to $C$ and one is carried to the parent of $C$. Finally, observe that the root of $\cT'$, which is $V$, must have at least two children which may be odd sets or singleton vertices. Therefore, the root has two vertices which were not assigned to any $C \in \Lc$. 

Since each $C \in \Lc$ has been assigned two distinct vertices and two vertices are still left-over at the root, $|\Lc| \leq {n \over 2} -1$. This is also a bound of on the number of Type 2 phases. The lemma follows. 
\end{proof}

Since each phase of Algorithm \ref{alg.main} outputs one term, we get:

\begin{corollary}
	\label{cor.terms}
	The number of terms in the decomposition found by Algorithm \ref{alg.main} is at most $m$.
\end{corollary}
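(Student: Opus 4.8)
The plan is to combine Lemma \ref{lem.phases} with the observation, already recorded just before the corollary, that each phase of Algorithm \ref{alg.main} outputs exactly one term of the decomposition. Concretely, I would argue as follows. By Lemma \ref{lem.phases}, the While-loop in Step 2 executes at most $m$ times. Each iteration of that loop ends either at Step 2(f), where a single term $\beta \cdot M$ is emitted, or at Step 2(g), where a single term $\gamma \cdot M$ is emitted; in both cases the loop body outputs one and only one term. Hence the total number of terms output is bounded by the number of phases, which is at most $m$.

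The one point that requires a sentence of care is to confirm that what the algorithm outputs is genuinely a valid decomposition of $x$ — i.e. that the corollary is not vacuous — so that ``number of terms in the decomposition found'' is meaningful. For this I would invoke the invariant maintained across phases: at the start of each phase $y$ is an $\alpha$-fractional perfect matching, and after emitting the term ($\beta \cdot M$ in a Type 1 phase, or $\gamma \cdot M$ in a Type 2 phase) the updated pair $(y, \alpha)$ again satisfies this invariant with $\alpha$ strictly decreased. In a Type 1 phase this is exactly the content of Lemma \ref{lem.partial} together with the test in Step 2(f); in a Type 2 phase it is the content of Lemma \ref{lem.gamma-eq}, which says precisely that $(y - \gamma\cdot M)$ is an $(\alpha - \gamma)$-fractional perfect matching. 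Since the $\alpha$-values of the emitted coefficients telescope and the process terminates with $y = 0$ (again by Lemma \ref{lem.phases}, which guarantees finitely many phases), the emitted terms sum to $x$ with nonnegative coefficients summing to $1$, so they form a bona fide convex combination of perfect matchings of $G_x$.

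There is essentially no obstacle here: the corollary is an immediate bookkeeping consequence of Lemma \ref{lem.phases}. All the genuine difficulty — the bound of $m - n/2 + 1$ on Type 1 phases via the edge-dropping argument, and the bound of $n/2 - 1$ on Type 2 phases via the charging argument on the laminar tree $\cT'$ — has already been discharged in the proof of Lemma \ref{lem.phases}. The only thing to be slightly careful about, if one wanted to be pedantic, is that Type 1 and Type 2 phases together number at most $(m - n/2 + 1) + (n/2 - 1) = m$, which is exactly the arithmetic already performed in that proof; the corollary simply restates this total as a bound on terms.
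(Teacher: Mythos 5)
Your proposal is correct and matches the paper's argument exactly: the paper derives the corollary in one line from Lemma \ref{lem.phases} together with the observation that each phase emits one term. Your additional paragraph verifying that the output is a genuine convex decomposition is sound but not needed for the corollary as stated — the paper takes that for granted and only counts terms.
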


In placing an upper bound on the running time, special care is needed for bounding the growth of denominators, in the components of $y$, as a function of the number of phases of Algorithm \ref{alg.main}. 

\begin{lemma}
	\label{lem.largest}
	The largest denominator encountered by Algorithm \ref{alg.main} is bounded by $d^m \cdot n^{n/2}$, where $d$ and $m$ are defined in Notation \ref{not.edges}.
\end{lemma}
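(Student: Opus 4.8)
The plan is to track the largest denominator phase by phase, distinguishing the two types of phases exactly as in Lemma~\ref{lem.phases}. Write $D_0 = d$ for the largest denominator in the input $x$, and let $D_t$ be the largest denominator occurring in the vector $y$ after $t$ phases. In any phase, the new vector is obtained as $y' = (y - c\cdot M)$ for some coefficient $c$ (equal to $\beta$ in a Type~1 phase and to $\gamma$ in a Type~2 phase), so the only way the denominators can grow is through the denominator of $c$; indeed every component of $y'$ is either a component of $y$ or a component of $y$ minus $c$, so the lcm of the denominators of $y'$ divides the lcm of the denominators of $y$ and the denominator of $c$.

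The two cases are then handled separately. In a Type~1 phase, $c = \beta = \min_{e\in M}\{y_e\}$, which is itself one of the components of $y$; hence its denominator divides the current lcm of denominators, and $D_t$ does not increase at all. In a Type~2 phase, $c = \gamma$, and here I would invoke Lemma~\ref{lem.denominators} applied to Procedure Find-Gamma, which (as noted in Section~\ref{sec.subroutine}) is run in Step~2(g)(i) precisely for this bookkeeping purpose, with input $(\alpha, y, M, \beta, S)$: if every denominator in $\alpha$ and $y$ at the start of that phase is (a divisor of) $b$, then the denominator of $\gamma$ is bounded by $b\cdot n$. One must also check that $\alpha$ carries no worse a denominator than $y$, which follows because $\alpha$ is updated only by subtracting the coefficients $c$ output so far, each of which has a denominator dividing the running lcm. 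So each Type~2 phase multiplies the denominator bound by at most $n$, while each Type~1 phase leaves it unchanged.

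Finally I would combine this with the counting from Lemma~\ref{lem.phases}: there are at most $m - n/2 + 1 \le m$ Type~1 phases (contributing no growth) and at most $n/2 - 1 \le n/2$ Type~2 phases (each contributing a factor of at most $n$). Hence the largest denominator ever encountered is at most $d \cdot n^{n/2}$. To get the stated bound $d^m \cdot n^{n/2}$ I would note this is only weaker, so it certainly holds; the slack presumably absorbs a cruder accounting in which one does not argue that $\min_{e\in M}\{y_e\}$ contributes nothing but instead bounds the growth per phase more loosely. The main obstacle is the Type~2 analysis: one has to be careful that Lemma~\ref{lem.denominators}'s hypothesis ``every denominator occurring in $\alpha$ and $y$ is $b$'' is actually met with $b = D_t$ at the start of the phase, i.e. that no intermediate quantity computed inside Find-Gamma or inside the binary search of Step~2(f) secretly introduces a larger denominator before $\gamma$ itself is produced — but since $\gamma$ is the only number extracted from that phase and fed forward into $y$, and the binary search only evaluates cut capacities of $y - b\cdot M$ for test values $b$ without recording them, this is safe.
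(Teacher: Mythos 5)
There is a genuine gap: you track the \emph{largest} denominator $D_t$, but the quantity whose evolution you actually control is the \emph{least common multiple} of the denominators, and the two are not interchangeable. Concretely, your claim that a Type~1 phase leaves $D_t$ unchanged is false: if $y_e = 1/2$ and $\beta = \min_{e'\in M}\{y_{e'}\} = 1/3$, then $y'_e = y_e - \beta = 1/6$, so the largest denominator jumps from $3$ to $6$ even though $\beta$ is itself a component of $y$. What a Type~1 phase really preserves is the lcm of the denominators, because $\beta$'s denominator divides that lcm. The same conflation affects your Type~2 step: the hypothesis of Lemma~\ref{lem.denominators} is that $\alpha$ and all components of $y$ share a common denominator $b$ (so that $y(\delta_y(T)) - \alpha$ again has denominator $b$); knowing only that the maximum denominator is $b$ does not give this, and the proof of Lemma~\ref{lem.denominators} would not go through with that weaker hypothesis.

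This is exactly why the paper's bound carries the factor $d^m$ and why your claimed improvement to $d \cdot n^{n/2}$ is not correct. The paper begins by rewriting every component of $x$ over the common denominator $\mathrm{lcm}$ of all denominators of $x$, which can be as large as $d^m$ (there are $m$ positive components, each with denominator up to $d$); it then maintains as an \textbf{Invariant} that all components of $y$ (and $\alpha$) share a single denominator. Under that Invariant, Type~1 phases genuinely do not increase the common denominator, Lemma~\ref{lem.denominators}'s hypothesis is met verbatim in Type~2 phases, and the $\le n/2$ Type~2 phases each multiply the common denominator by at most $n$, yielding $d^m \cdot n^{n/2}$. So the $d^m$ is not slack to be absorbed; it is the unavoidable price of establishing a common denominator at the outset. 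To repair your argument, replace ``largest denominator'' with ``common denominator,'' start from $D_0 \le d^m$, and invoke the Invariant before applying Lemma~\ref{lem.denominators}.
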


\begin{proof}
	We will include an additional step in Initialization in Algorithm \ref{alg.main}: Compute the lcm of all denominators occurring in the input vector $x$ and take that to be the denominator of each component of $x$; clearly, the lcm is bounded by $d^m$. Throughout the algorithm, we will ensure the following:
	
	{\bf Invariant:} All components of $y$ have the same denominator. 
	
As in Lemma \ref{lem.phases}, we say that a phase of Algorithm \ref{alg.main} is of {\em Type 1} if the condition in Step 2(f) is true and of {\em Type 2} otherwise. Lemma \ref{lem.phases} shows that the number of phases of Type 1 and 2 is upper bounded by $m$ and ${n \over 2}$, respectively. 

It is easy to check that because of the Invariant, a phase of Type 1 does not change the denominators. By Lemma \ref{lem.denominators}, if the denominators were $b$ at the beginning of a phase of Type 2, then the denominator of $\gamma$ computed is at most $b \cdot n$. The updated  vector $y$ and the updated $\alpha$ will also have this same bound. At this point, we will ensure that all components of the updated vector $y$ have the same denominator, hence ensuring the Invariant. 

Since the number of phases of Type 2 is bounded by ${n \over 2}$, the lemma follows. 
\end{proof}

\begin{theorem}
	\label{thm.main}
	A fractional perfect matching in a non-bipartite graph can be decomposed, in polynomial time, into a convex combination of perfect matchings with at most $m$ terms. In the worst case, Algorithm \ref{alg.main} requires $O(n^3 m^2)$ max-flow min-cut computations. 
\end{theorem}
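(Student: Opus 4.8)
The plan is to assemble the three ingredients already developed in the excerpt: correctness of each phase (Lemmas~\ref{lem.once}, \ref{lem.min-wt}, \ref{lem.partial} together with the analysis of $\gamma$ in \eqref{eq.gamma-first} and Lemma~\ref{lem.gamma-eq}), the bound of $m$ on the number of phases (Lemma~\ref{lem.phases}, Corollary~\ref{cor.terms}), and the bound on the size of numbers that arise (Lemmas~\ref{lem.denominators} and~\ref{lem.largest}). Correctness is the easy part: each phase outputs a term $\beta\cdot M$ (Type~1) or $\gamma\cdot M$ (Type~2) such that the residual vector is again an $\alpha'$-fractional perfect matching for the updated $\alpha'$, by Lemma~\ref{lem.partial} in the Type~1 case and by Lemma~\ref{lem.gamma-eq} in the Type~2 case; since the final phase is of Type~1 and terminates with $y=0$, summing the $\alpha$-decrements telescopes to $1$, so the output is a genuine convex combination of perfect matchings. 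The number of terms being at most $m$ is exactly Corollary~\ref{cor.terms}.

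The substance of the theorem is the running-time bound $O(n^3 m^2)$ max-flow computations, which I would establish by accounting for the cost of each phase and multiplying by the number of phases. First, by Lemma~\ref{lem.phases} there are at most $m$ phases in total ($m-\tfrac n2+1$ of Type~1 and at most $\tfrac n2-1$ of Type~2). Within a phase: Step~2(b) is one minimum-weight perfect matching computation (not a max-flow call, but polynomial and dominated); Steps~2(a), 2(c), 2(d) are trivial; Step~2(e) is one call to Padberg--Rao, i.e. $O(n)$ max-flow computations. The binary search in Step~2(g)(i) is where the arithmetic bound matters: by Lemma~\ref{lem.largest} every number encountered has denominator at most $d^m n^{n/2}$, hence the binary search over $[0,\beta]$ terminates in $O(\log(d^m n^{n/2})) = O(m\log d + n\log n)$ iterations, each iteration being one Padberg--Rao call, i.e. $O(n)$ max-flows. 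Finally Step~2(g)(v), the $\mathrm{update}(\Lc)$ operation, costs $O(nm^2)$ Padberg--Rao calls by Lemma~\ref{lem.time-find-L}, i.e. $O(n^2 m^2)$ max-flows; this is the dominant per-phase cost, and it occurs only in Type~2 phases, of which there are at most $\tfrac n2$. Multiplying, the $\mathrm{update}$ cost contributes $O(n)\cdot O(n^2 m^2) = O(n^3 m^2)$ max-flow computations; every other contribution (the Type~1 phases at $O(n)$ max-flows each, the $m$ Padberg--Rao calls in Step~2(e), the binary searches) is of strictly lower order, so the total is $O(n^3 m^2)$ max-flow min-cut computations.

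I should also address one preprocessing cost that is logically part of the algorithm though not part of a phase: obtaining the initial maximal laminar family $\Lc$, which by Lemma~\ref{lem.time-find-L} takes $O(nm^2)$ Padberg--Rao calls, i.e. $O(n^2 m^2)$ max-flows --- again absorbed into $O(n^3 m^2)$. The one point requiring genuine care, and the step I expect to be the main obstacle to a fully rigorous write-up, is justifying that the binary search in Step~2(g)(i) actually finds $\gamma$ in polynomially many iterations: this is not purely a matter of bit-length, because binary search over an interval only pinpoints a rational with bounded denominator once the search has narrowed to an interval of length below the reciprocal of (the square of) that denominator bound. So I would spell out that, since $\gamma$ has denominator at most $D := d^m n^{n/2}$ (Lemma~\ref{lem.denominators} applied with $b \le D$) and the candidate values compared during the search likewise have denominators bounded by a polynomial-in-$\log$ power of $D$, two distinct feasible/infeasible thresholds differ by at least $1/D^{2}$ or so, so after $O(\log D^2) = O(\operatorname{poly}(m,\log d, n\log n))$ halvings the search has isolated $\gamma$ exactly; each halving invokes Padberg--Rao once. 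With that in hand, every quantity in the running-time accounting is polynomial and the max-flow count is as claimed, completing the proof.
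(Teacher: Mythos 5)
Your proposal is correct and follows essentially the same route as the paper: the dominant cost is attributed to the $\mathrm{update}(\Lc)$ executions in Type~2 phases, bounded via Lemma~\ref{lem.time-find-L} (at $O(n)$ max-flows per Padberg--Rao call) and multiplied by the $O(n)$ count of such phases, with Lemma~\ref{lem.largest} invoked to make the binary search polynomial. Your extra care in explaining why the denominator bound actually lets the binary search terminate exactly on $\gamma$ (narrowing to an interval shorter than $1/D^2$ and rounding to the unique bounded-denominator rational) is a welcome tightening of a point the paper dispatches in a single sentence, but it is the same argument, not a different route.
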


\begin{proof}
	As a consequence of Lemma \ref{lem.largest}, binary search will take polynomial time. Therefore we have argued above that all steps of Algorithm \ref{alg.main} can be implemented in polynomial time and it finds a decomposition with at most $m$ terms.
	
	The running time is dominated by the time taken by Step 2(g)(v), i.e., the executions of the operation of update$(\Lc)$. As shown in Lemma \ref{lem.time-find-L}, each execution requires $O(nm^2)$ calls to the Padberg-Rao algorithm. Each call requires $O(n)$ max-flow min-cut computations. Finally, there are at most $n$ executions of update$(\Lc)$. Therefore, the algorithm requires $O(n^3 m^2)$ max-flow min-cut computations in the worst case.
\end{proof}

Next, consider a generalization in which the given non-bipartite graph $G = (V, E)$ has non-negative edge-weights given by $w$. Let $x$ be a minimum weight fractional perfect matching in $G$ and our task is to decompose it into a convex combination of minimum weight perfect matchings. Observe that if $w$ is ignored and $x$ is decomposed using Algorithm \ref{alg.main}, we will get the desired decomposition. This is so because the weight of $x$ equals the weight of the convex combination found. Therefore, if the latter has a perfect matching which is not of minimum weight, then it must have one which is lighter than $x$, giving a contradiction. Hence we get:

\begin{corollary}
	\label{cor.weighted}
	Given non-bipartite graph $G = (V, E)$ with non-negative edge-weights and a minimum weight fractional perfect matching $x$, Algorithm \ref{alg.main} decomposes $x$ into a convex combination of minimum weight perfect matchings. 
\end{corollary}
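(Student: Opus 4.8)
The plan is to leverage the fact that Algorithm \ref{alg.main} never consults the edge-weights $w$: it operates purely on the support graph $G_x$ and the vector $x$. Hence running it on $x$ yields, by Theorem \ref{thm.main}, a convex combination
\[ x \;=\; a_1 \cdot M_1 + \cdots + a_k \cdot M_k, \qquad a_i > 0, \quad \textstyle\sum_i a_i = 1, \]
where each $M_i$ is a perfect matching of $G$ and $k \le m$. It then remains only to argue that every $M_i$ appearing in this decomposition is of minimum weight.

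The first step is to pin down the weight of $x$ itself. Since, as noted after LP (\ref{eq.non-bipartite}), constraints \ref{const.1non}--\ref{const.2non} describe exactly the perfect matching polytope of $G$, the minimum of the linear objective $w(x) = \sum_e w_e x_e$ over that polytope is attained at a vertex, i.e.\ at an integral perfect matching; therefore $w(x) = \OPT$, where $\OPT$ denotes the minimum weight of a perfect matching in $G$. The second step is linearity: applying $w$ to the displayed decomposition gives $w(x) = \sum_i a_i\, w(M_i)$, with $w(M_i) = \sum_{e \in M_i} w_e$.

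The final step combines these with non-negativity. For every $i$ we have $w(M_i) \ge \OPT$, the $a_i$ are positive, and $\sum_i a_i = 1$, so
\[ 0 \;=\; w(x) - \OPT \;=\; \sum_i a_i \bigl( w(M_i) - \OPT \bigr) \;\ge\; 0 . \]
Every summand is non-negative, hence each vanishes, and since $a_i > 0$ this forces $w(M_i) = \OPT$ for all $i$; thus the decomposition found by Algorithm \ref{alg.main} consists entirely of minimum weight perfect matchings. (Equivalently, the averaging phrasing suggested in the text: if some $M_i$ had $w(M_i) > w(x) = \OPT$, then since the $a_i$-weighted average of the $w(M_i)$ equals $w(x)$, some $M_j$ would satisfy $w(M_j) < \OPT$, which is impossible.)

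I do not expect a genuine obstacle. The only point needing a word of justification is the equality $w(x) = \OPT$, which rests on integrality of the perfect matching polytope rather than on any property of Algorithm \ref{alg.main}; everything else is linearity of $w$ plus a one-line non-negativity argument, so no separate per-phase bookkeeping is required.
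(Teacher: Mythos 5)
Your proof is correct and takes essentially the same approach as the paper: run Algorithm \ref{alg.main} ignoring $w$, note $w(x)$ equals the $a_i$-weighted average of the $w(M_i)$, and conclude by an averaging/non-negativity argument that every $M_i$ must be of minimum weight. The paper phrases it as the contrapositive (a heavier $M_i$ would force a lighter $M_j$, contradicting minimality of $x$), which is precisely the equivalent formulation you give in your final parenthetical; your version is just slightly more explicit about why $w(x) = \OPT$.
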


\begin{remark}
	\label{rem.bipartite}
	With the execution of successive phases, the support graph keeps getting sparser. As a result, it may get disconnected or be rendered bipartite. If the first possibility occurs, then the restriction of the current $x$ to each connected component is an $\alpha$-fractional perfect matching in that component. The second possibility is illustrated by graph $G_1$. For any choice of the three perfect matchings containing one edge from the set $\{(a, b), (c, d), (e, f)\}$ in the first phase, the remaining graph is bipartite and hence can be decomposed via the BvN Algorithm.
\end{remark}

	\section{Discussion}
\label{sec.discussion}

\cite{AVnari2019matching} observe the intriguing phenomenon that time and time again, algorithmic work on matching problems, from numerous perspectives, has proceeded in the following manner: first the bipartite case gets solved and much later, using substantially more mathematical and algorithmic machinery, the non-bipartite case gets solved. Moreover, the additional facts needed are different for different problems. Our extension of the Birkhoff-von Neumann Theorem to non-bipartite graphs is another example of this phenomenon. Is there a ``reason'' or explanation for this phenomenon?

We leave the open problem of determining if a tighter upper bound can be placed on the number of terms in our decomposition. Finally, it seems fruitful to look for efficient combinatorial algorithms for decomposing a point in the polyhedra of other combinatorial objects, such as the spanning tree polytope.

	\section{Acknowledgements}
\label{sec.ack}

I wish to thank Nima Anari, Jugal Garg and Martin Groetschel for valuable discussions, and Milena Mihail and Thorben Trobst for pointing out Corollary \ref{cor.weighted}.

	\bibliographystyle{alpha}
	\bibliography{refs}
\end{document}